\theoremstyle{plain}
\newtheorem{thm}{Theorem}[section]
\newtheorem{cor}[thm]{Corollary}
\newtheorem{prop}[thm]{Proposition}
\theoremstyle{definition}
\theoremstyle{definition}
\newtheorem{defin}[thm]{Definition}
\theoremstyle{remark}
\newcommand{\id}{\mathbbm{1}}
\newcommand{\ldue}{\ell ^2}
\newcommand{\linf}{\ell ^\infty}
\newcommand{\luno}{\ell ^1}
\newcommand{\Lqt}[1]{\mathcal{L}^{#1}_{t}}
\newcommand{\Lrx}[1]{\mathcal{L}^{#1}(\Z^2)}
\newcommand{\Z}{\mathbb{Z}}
\newcommand{\R}{\mathbb{R}}
\newcommand{\C}{\mathbb{C}}
\newcommand{\T}{\mathbb{T}}
\newcommand{\norm}[1]{\left|\left| #1 \right|\right|}
\def\cH{\mathcal{H}}
\begin{document}
	\title{Absolute continuity of the spectrum of coupled
		identical systems on 1D  lattices}
	%
	
	\author{B. Langella$^{*}$,
		D. Bambusi\footnote{Dipartimento di Matematica, Universit\`a
			degli Studi di Milano, Via Saldini 50, I-20133
			Milano. \newline \textit{Email: }
			\texttt{beatrice.langella@unimi.it, dario.bambusi@unimi.it}},} \def\bv{{\bf v}}
	
	\maketitle
	
	\begin{abstract}
		We prove that the spectrum of the discrete
                Schr\"odinger operator on $\ell^2(\Z^2)$
\begin{align}
\label{hamiltoniana astratta1}
		(\psi_{n,m})\mapsto -(\psi_{n+1,m} +\psi_{n-1,m} +
		\psi_{n,m+1} +\psi_{n,m-1})+V_n\psi_{n,m} \	, \\
\nonumber
\quad (n, m)
		\in \Z^2,\ \left\{ V_n\right\}\in\ell^\infty(\Z)
\end{align}
is absolutely continuous.
	\end{abstract}
		
	\section{Introduction} \label{intro}
	
	In this paper we study the spectrum of the discrete Schr\"odinger operator on $\ldue(\Z^2)$
	\begin{equation} \label{hamiltoniana astratta}
	H:=	-\Delta + V,
	\end{equation}
	where $-\Delta$ is the discrete Laplacian acting on
        $\ell^2(\Z^2)\ni \psi = \{\psi_{n,m}\}$ by
	$$
	\left(-\Delta \psi \right)_{n,m} = -(\psi_{n+1,m} +\psi_{n-1,m} + \psi_{n,m+1} +\psi_{n,m-1}) \quad \forall\ (n, m) \in \Z^2
	$$
	and $V=\{V_n\}_{n \in \Z}$ is a sequence of real numbers defining a multiplication operator by
	$$
	\psi_{n,m} \mapsto V_n \psi_{n,m} \quad \forall\ (n,m) \in \Z^2.
	$$
	We emphasize that $V$ is independent of $m,$ so that the
        operator \eqref{hamiltoniana astratta} can be interpreted as
        describing infinitely many identical chains, each one labeled
        by the index $m$ and coupled to the others by a discrete
        laplacian along the $m$ direction.\\
	Our main result is the following one:
	\begin{thm} \label{intro main result}
	Let $V=\{V_n\}_{n \in \Z}\in \ell^\infty(\Z);$ then \eqref{hamiltoniana astratta} has absolutely continuous spectrum.
	\end{thm}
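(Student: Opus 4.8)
\emph{Strategy.} The plan is to exploit that the potential does not depend on the second index, so that $H$ is translation invariant in $m$. Identifying $\ell^2(\Z^2)=\ell^2(\Z)\otimes\ell^2(\Z)$ with the first factor carrying the index $n$ and the second the index $m$, one has
$$
H \;=\; L\otimes \id \;+\; \id\otimes T ,
$$
where $L$ is the one--dimensional discrete Schr\"odinger operator $(L\phi)_n=-(\phi_{n+1}+\phi_{n-1})+V_n\phi_n$ on $\ell^2(\Z)$ and $T$ is the free discrete Laplacian $(T\phi)_m=-(\phi_{m+1}+\phi_{m-1})$. Equivalently, conjugating by the Fourier transform in the variable $m$ yields the fibered representation $H\simeq\int^{\oplus}_{\T}\big(L-2\cos\tau\,\id\big)\,\tfrac{d\tau}{2\pi}$. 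The conceptual point is that one needs \emph{no} information on the single fibers $H(\tau):=L-2\cos\tau\,\id$ --- these may well carry point spectrum, for instance when $V$ is random --- because absolute continuity will be produced by integrating over $\tau$, that is, by the sole fact that $T$ has purely absolutely continuous spectrum $[-2,2]$.

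\emph{Spectral measures.} I would then compute, for $\phi,\eta\in\ell^2(\Z)$ and $f$ bounded continuous,
$$
\big\langle \phi\otimes\eta,\, f(H)\,\phi\otimes\eta\big\rangle
=\int_{\T}|\hat\eta(\tau)|^2\left(\int_{\R}f(\lambda-2\cos\tau)\,d\mu^L_\phi(\lambda)\right)\frac{d\tau}{2\pi},
$$
where $\mu^L_\phi$ is the spectral measure of $L$ at $\phi$. Hence the spectral measure of $H$ at $\phi\otimes\eta$ equals the convolution $\mu^L_\phi*\rho_\eta$, where $\rho_\eta$ is the push-forward of $|\hat\eta(\tau)|^2\,\tfrac{d\tau}{2\pi}$ under $\tau\mapsto-2\cos\tau$. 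Since $\tau\mapsto-2\cos\tau$ is smooth with derivative vanishing only at $\tau\in\{0,\pi\}$, and there only to first order, the change of variable $s=-2\cos\tau$ shows that $\rho_\eta$ is absolutely continuous, with a density having at worst an integrable inverse--square--root singularity at the band edges $s=\pm2$. As the convolution of an arbitrary finite Borel measure on $\R$ with an absolutely continuous one is absolutely continuous, $\mu^H_{\phi\otimes\eta}$ is absolutely continuous.

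\emph{Conclusion.} Finally I would invoke that the absolutely continuous subspace $\mathcal{H}_{ac}(H)$ is a closed (reducing) subspace of $\ell^2(\Z^2)$: by the previous step it contains every elementary tensor $\phi\otimes\eta$, hence their dense linear span, hence all of $\ell^2(\Z^2)$, so that $H$ is purely absolutely continuous. The only step that is not routine bookkeeping is the absolute continuity of $\rho_\eta$, which rests entirely on the nondegeneracy of the critical points of the symbol $-2\cos\tau$ of the Laplacian in the $m$ direction --- the inverse--square--root singularities at $s=\pm2$ being harmless since integrable. The remaining ingredients, namely the direct integral / tensor decomposition, the closedness of $\mathcal{H}_{ac}$, and the convolution lemma for measures, are standard.
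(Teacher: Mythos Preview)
Your proof is correct and follows essentially the same route as the paper: write $H=L\otimes\id+\id\otimes T$, identify the spectral measure at an elementary tensor as the convolution of a spectral measure of $L$ with one of $T$, use that the latter are absolutely continuous so that the convolution is absolutely continuous, and conclude by density. The only cosmetic difference is that the paper obtains the convolution formula for the spectral measures by quoting a general result of Fox on separable operators (Theorem~\ref{teo mvp tensore}), whereas you derive it directly from the direct-integral representation via Fourier transform in $m$ and give an explicit argument for the absolute continuity of $\rho_\eta$; your version is thus slightly more self-contained.
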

	
	We will also prove the following dispersive estimate:
	\begin{thm} \label{intro dispersive}
	Under the hypotheses of Theorem \ref{intro main result}, let $\psi_0 \in \luno(\Z^2);$ then $\exists\ C>0$ such that 
	\end{thm}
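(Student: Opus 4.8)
The plan is to exploit the tensor product structure that $H$ inherits from the fact that the potential depends only on $n$. With respect to the splitting $\ldue(\Z^2)\cong\ldue(\Z)\otimes\ldue(\Z)$ into the $n$--variable and the $m$--variable one has
\begin{equation*}
H = H_1\otimes\id + \id\otimes L , \qquad H_1 := -\Delta_{\Z}+V \ \text{ on the first factor}, \quad L := -\Delta_{\Z} \ \text{ on the second} ,
\end{equation*}
and the two summands commute, so the propagator factorizes, $e^{-itH}=e^{-itH_1}\otimes e^{-itL}$. Consequently the integral kernel of $e^{-itH}$ on $\Z^2$ is the product $K(t;(n,m),(n',m'))=K_1(t;n,n')\,K_L(t;m,m')$ of the kernels of the two one--dimensional propagators, and since the $\luno\to\linf$ operator norm is exactly the supremum of the modulus of the kernel,
\begin{equation*}
\norma{e^{-itH}}{\luno(\Z^2)\to\linf(\Z^2)} \;=\; \norma{e^{-itH_1}}{\luno(\Z)\to\linf(\Z)}\cdot\norma{e^{-itL}}{\luno(\Z)\to\linf(\Z)} .
\end{equation*}
This already controls $\norma{e^{-itH}\psi_0}{\linf(\Z^2)}$ by $\norma{\psi_0}{\luno(\Z^2)}$ up to the scalar factor on the right, so it remains to estimate the two one--dimensional norms.

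For the first factor there is nothing to decay: since $V$ is an \emph{arbitrary} bounded sequence, $H_1$ may have pure point spectrum (e.g.\ for a random or a suitable almost periodic $V$) and $K_1(t;n,n')$ need not go to zero; the only universal bound available is unitarity, $|K_1(t;n,n')|=|\langle\delta_n,e^{-itH_1}\delta_{n'}\rangle|\le\norm{\delta_n}\,\norm{e^{-itH_1}\delta_{n'}}=1$, whence $\norma{e^{-itH_1}}{\luno(\Z)\to\linf(\Z)}\le 1$ for every $t$. All the dispersion must therefore come from the transverse factor $L=-\Delta_{\Z}$, the free discrete Schr\"odinger operator on the line, for which one has the classical estimate $\norma{e^{-itL}}{\luno(\Z)\to\linf(\Z)}\le C(1+|t|)^{-1/3}$. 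Indeed its kernel is explicitly $K_L(t;m,m')=\frac{1}{2\pi}\int_{-\pi}^{\pi}e^{2it\cos\theta}e^{i(m-m')\theta}\,d\theta = i^{m-m'}J_{m-m'}(2t)$, and writing it as $\frac{1}{2\pi}\int_{-\pi}^{\pi}e^{it\phi_k(\theta)}\,d\theta$ with $\phi_k(\theta)=2\cos\theta+\frac{k}{t}\theta$, $k=m-m'$, the bound follows from van der Corput's lemma once one checks that $\phi_k''(\theta)=-2\cos\theta$ and $\phi_k'''(\theta)=2\sin\theta$ never vanish simultaneously; the only loss occurs near the degenerate critical points $\theta=0,\pi$, producing the Airy--type scaling $t^{-1/3}$, uniformly in $k$.

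Multiplying the two bounds gives, for every $\psi_0\in\luno(\Z^2)$,
\begin{equation*}
\norma{e^{-itH}\psi_0}{\linf(\Z^2)} \;\le\; \frac{C}{(1+|t|)^{1/3}}\,\norma{\psi_0}{\luno(\Z^2)} ,
\end{equation*}
which is the claimed estimate. I do not expect a serious obstacle: the only genuinely analytic ingredient is the van der Corput bound for the Bessel kernel that is uniform in the spatial index, and this is entirely classical (it is the one--dimensional case of the standard discrete dispersive estimates). The point to get right is the conceptual one, namely that one should \emph{not} attempt to extract any decay in the $n$--direction -- in general there is none -- and instead route all of it through the free Laplacian in the $m$--direction, using only unitarity in $n$. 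This also explains the exponent: the bound would improve to $t^{-2/3}$ in the free case $V\equiv 0$, but $t^{-1/3}$ is the best possible uniformly over $V\in\linf(\Z)$.
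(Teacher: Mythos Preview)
Your argument is correct and rests on the same idea as the paper's: factorize the propagator as $e^{-itH_1}\otimes e^{-itL}$, use only unitarity on the $H_1$ factor, and extract the $t^{-1/3}$ decay from the free one--dimensional Laplacian in the transverse direction.

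The technical packaging differs slightly. You observe that the $\luno\to\linf$ norm of a discrete operator equals the supremum of its kernel, and that the kernel of a tensor product is the product of kernels, so the norm factorizes exactly; then $|K_1(t;n,n')|\le 1$ by Cauchy--Schwarz and unitarity, while $\sup_{m,m'}|K_L(t;m,m')|\le C\langle t\rangle^{-1/3}$ by the Bessel/van der Corput bound. The paper instead routes the estimate through mixed norms, passing from $\ell^\infty_{n,m}$ to $\ell^\infty_m\ell^2_n$, using unitarity of $e^{itA_1}$ on $\ell^2_n$, then swapping to $\ell^2_n\ell^\infty_m$, applying the one--dimensional dispersive estimate in $m$, and finally using $\ell^1_n\hookrightarrow\ell^2_n$. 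Your kernel argument is arguably more direct and makes the product structure of the bound transparent; the paper's mixed--norm chain has the small advantage that it would generalize more readily to $\ell^p\to\ell^{p'}$ estimates where a pointwise kernel bound is not equivalent to the operator norm. Your sketch of the van der Corput argument for $K_L$ is also more detailed than the paper, which simply cites the literature for the $\langle t\rangle^{-1/3}$ bound.
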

	\begin{equation} \label{intro dispersive est}
	\|e^{i\left(-\Delta + V\right)t} \psi_0\|_{\linf(\Z^2)} \leq \frac{C} {\langle t \rangle^{\frac{1}{3}}} \| \psi_0\|_{\luno(\Z^2)} \quad \forall\  t \in \R.
	\end{equation}
	Of course from Theorem \ref{intro dispersive} one can deduce standard Strichartz estimates (see \cite{keel_tao}).\\
	The main examples we have in mind are the case of a quasi-periodic potential, where ${V_n:= a V(\omega
		n+\theta)} $, with $V\in C^\infty(\T^1)$ and $a \in \R,$ and the case where $\{V_n\}_{n \in \Z}$ is a random sequence. For both cases it is known that,
	under suitable assumptions, the spectrum of the unidimensional operator
	\begin{equation} \label{qper 1d}
	\psi_n\mapsto -(\psi_{n+1}+\psi_{n-1})+ V_n \psi_n, \quad \psi=\{\psi_n\}_{n \in \Z} \in \ldue(\Z)
\end{equation}
is pure point. See \cite{fro_spe_way} for the latter and for instance
\cite{jitom} in the case of the almost- Mathieu operator, namely
$V(\theta) = a \cos(\theta),$ and \cite{eli_discr} for the proof of
pure pointness under more generic hypotheses on the quasi-periodic
potential $V.$\\ Theorem \ref{intro main result} shows that coupling
infinitely many chains of type \eqref{qper 1d}, the spectrum becomes
absolutely continuous, independently of the spectral nature of the
operator in \eqref{qper 1d}. Furthermore, by Theorem~\ref{intro
  dispersive}, one gets dispersion.  On the contrary, as shown in
\cite{eli_discr_bis} for quasi-periodic potentials, pure pointness
persists if the number of chains we couple is finite.\\ The proof of
our results is very easy; nevertheless, we think that they could help
to clarify the behavior of two dimensional chains. We recall that if,
instead of considering infinitely many {\it identical} coupled chains,
one considers a situation in which $V$ is a quasi-periodic potential
depending in a nondegenerate way on $m$ too, system
\eqref{hamiltoniana astratta} exhibits Anderson localization, as shown
in \cite{bou_gol_sch} in the two-dimensional case and in
\cite{bourg_hd} in higher dimensions, so that the spectrum is pure
point.  We also remark that our method trivially extends to higher
dimensional lattices.

The rest of the paper is devoted to the proof of the above results.

\noindent{\it Acknowledgements.} We thank Didier Robert for pointing to
our attention the paper \cite{spec_var_sep}.

\section{Absolute Continuity of the Spectrum} \label{proof_ass_cont}
We exploit the main result of \cite{spec_var_sep} dealing with
selfadjoint operators in Hilbert spaces $\cH$ with the structure of
tensor product $\cH=\cH_1\otimes \cH_2$. 

First we recall the definition of separable operators.

\begin{defin}
\label{sep}
A selfadjoint operator $H$ on $\cH$ is called separable with parts
$A_1$, $A_2$, if there exist cores $D_i\subseteq \cH_i$ such that the linear span of
$D_1\otimes D_2$ is also a core for $\cH$ and, for all pure tensors
$\psi=\psi_1\otimes \psi_2$ with $\psi_i\in D_i$, one has
$$
H \psi = A_1 \psi_1 \otimes \psi_2 + \psi_1 \otimes A_2 \psi_2 .
$$ 
\end{defin} 

By \cite{spec_var_sep} the following result holds.

\begin{thm} \label{teo mvp tensore}
	Let $H$ be a separable selfadjoint operator on $\mathcal{H}$,
        then the projection valued measure $P$ associated to $H$ is
        the {tensor convolution measure,} defined by the
        relation
	\begin{equation} \label{mvp tens}
	\langle P(\ \cdot\ ) \psi_1 \otimes \psi_2, \phi_1 \otimes \phi_2 \rangle_{\mathcal{H}} = \langle P_1(\ \cdot\ ) \psi_1, \phi_1 \rangle_{\mathcal{H}_1} \ast \langle P_2(\ \cdot\ ) \psi_2, \phi_2 \rangle_{\mathcal{H}_2},
	\end{equation}
	where $P_i$ are the projection valued measures associated to $A_i,\ i=1,\ 2.$
\end{thm}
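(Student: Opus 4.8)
We sketch how one proves Theorem~\ref{teo mvp tensore}. The plan is to transfer the identity from projection valued measures to the associated unitary groups, where separability gives a clean factorization, and then to invoke the uniqueness of the Fourier transform of a finite complex measure on $\R$.

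First I would make the operator structure explicit. Let $T:=A_1\otimes\id+\id\otimes A_2$, defined on the algebraic span $\mathcal{D}$ of $D_1\otimes D_2$. By Definition~\ref{sep} the subspace $\mathcal{D}$ is a core for the selfadjoint operator $H$ and $H$ acts as $T$ on $\mathcal{D}$, whence $H=\overline{T}$. The operators $A_1\otimes\id$ and $\id\otimes A_2$ are selfadjoint on their natural domains and commute (their spectral projections $P_1(\,\cdot\,)\otimes\id$ and $\id\otimes P_2(\,\cdot\,)$ commute), so they admit a joint spectral measure $E$ on $\R^2$ with $E(B_1\times B_2)=P_1(B_1)\otimes P_2(B_2)$. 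The operator $\int_{\R^2}(x+y)\,dE(x,y)$ is selfadjoint and extends $T$, hence it coincides with the essentially selfadjoint $T$'s unique selfadjoint closure $H$. Applying the functional calculus to the bounded function $(x,y)\mapsto e^{it(x+y)}=e^{itx}e^{ity}$ yields the factorization
$$
e^{itH}=\int_{\R^2}e^{it(x+y)}\,dE(x,y)=e^{itA_1}\otimes e^{itA_2},\qquad t\in\R.
$$

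Next, fix pure tensors $\psi=\psi_1\otimes\psi_2$, $\phi=\phi_1\otimes\phi_2$ with $\psi_i,\phi_i\in D_i$, and introduce the finite complex Borel measures on $\R$
$$
\nu:=\langle P(\,\cdot\,)\psi,\phi\rangle_{\cH},\qquad \mu_i:=\langle P_i(\,\cdot\,)\psi_i,\phi_i\rangle_{\cH_i},\quad i=1,2,
$$
whose total variations are at most $\norm{\psi}\,\norm{\phi}$ and $\norm{\psi_i}\,\norm{\phi_i}$ by Cauchy--Schwarz. By Stone's theorem their Fourier transforms are $\widehat{\nu}(t)=\langle e^{itH}\psi,\phi\rangle$ and $\widehat{\mu_i}(t)=\langle e^{itA_i}\psi_i,\phi_i\rangle$. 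Using the group factorization above and $\langle(B_1\otimes B_2)(\psi_1\otimes\psi_2),\phi_1\otimes\phi_2\rangle=\langle B_1\psi_1,\phi_1\rangle\langle B_2\psi_2,\phi_2\rangle$, we obtain
$$
\widehat{\nu}(t)=\widehat{\mu_1}(t)\,\widehat{\mu_2}(t)=\widehat{\mu_1\ast\mu_2}(t),\qquad t\in\R,
$$
the last step being the convolution theorem for finite complex measures. Since a finite complex Borel measure on $\R$ is uniquely determined by its Fourier transform, $\nu=\mu_1\ast\mu_2$, which is exactly \eqref{mvp tens} for such pure tensors. Finally, both sides of \eqref{mvp tens} are bounded and separately linear in each of $\psi_1,\phi_1,\psi_2,\phi_2$ (note $\norm{\mu_1\ast\mu_2}\le\norm{\mu_1}\,\norm{\mu_2}$), so the density of $D_i$ in $\cH_i$ extends the identity to arbitrary $\psi_i,\phi_i\in\cH_i$.

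The main obstacle is the first step: checking rigorously that $\overline{A_1\otimes\id+\id\otimes A_2}=H$ and that this closure generates the product unitary group. The separability hypothesis is designed precisely so that the algebraic tensor product of the cores is a common core, reducing $H$ to the selfadjoint sum of two commuting selfadjoint operators; passing to their joint spectral measure is the cleanest way to organize this, after which the group factorization and everything downstream is routine. One should also be slightly careful that the convolution and the uniqueness theorem are applied to finite (not merely $\sigma$-finite) complex measures, which the Cauchy--Schwarz bounds on the total variations guarantee. (An alternative to the joint spectral measure is the Trotter product formula $e^{itH}=\lim_n(e^{itA_1/n}\otimes e^{itA_2/n})^n$, which directly gives the same factorization once essential selfadjointness of $T$ on $\mathcal{D}$ is known.)
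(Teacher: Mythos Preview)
The paper does not give its own proof of Theorem~\ref{teo mvp tensore}; it is quoted verbatim as a result of Fox \cite{spec_var_sep} (``By \cite{spec_var_sep} the following result holds''), so there is no in-paper argument to compare yours against.

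That said, your sketch is a correct self-contained proof. The logic is sound at each step: the separability hypothesis makes the algebraic tensor product of cores a core for $H$, hence $H$ is the unique selfadjoint extension of $A_1\otimes\id+\id\otimes A_2$ restricted to that core; the joint spectral measure of the two commuting selfadjoint parts then identifies $H$ with $\int(x+y)\,dE(x,y)$ and yields the group factorization $e^{itH}=e^{itA_1}\otimes e^{itA_2}$; and the passage from equality of Fourier transforms to equality of the finite complex spectral measures is the standard injectivity of the Fourier--Stieltjes transform. The density extension at the end is routine given the total-variation bounds you note. Your remark about the Trotter alternative is also apt. In short, you have supplied a proof that the paper simply imports from the literature.
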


We will apply this theorem to our case exploiting the structure
$\ldue(\Z^2) = \ldue(\Z) \otimes \ldue(\Z)$ of the space. Indeed one
immediately sees that the operator \eqref{hamiltoniana astratta} is
separable with parts
$$
(A_1 \chi)_{n} = -(\chi_{n+1}+\chi_{n-1})+ V_n \chi_{n}, \quad \chi=\{\chi_{n}\}_{n\in \Z} \in \ldue(\Z)
$$ and $A_2\equiv\Delta_m$ the discrete laplacian in the $m$
direction, namely
$$
\left(-\Delta_m \phi\right)_{m}= -\left(\phi_{m+1} + \phi_{m-1}\right) \quad \phi=\{\phi_{m}\}_{m\in \Z} \in \ldue(\Z).
$$

In order to exploit Theorem \ref{teo mvp tensore}, we first give a
result on the convolution of two bounded measures, one of which is
absolutely continuous.

\begin{prop} \label{ass cont conv}
	Let $m,\ n$ be complex finite Borel measures.  If $m$ is
        absolutely continuous with respect to Lebesgue measure, then
        their convolution $m \ast n$ is absolutely continuous with
        respect to Lebesgue measure.
\end{prop}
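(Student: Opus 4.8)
The plan is to reduce the statement to the defining property of absolute continuity together with Fubini's theorem. Write $dm(x)=f(x)\,dx$ with $f\in L^1(\R)$ the Radon--Nikodym density of $m$, and recall that the convolution $m\ast n$ is by definition the pushforward of the product measure $m\times n$ under the addition map $(x,y)\mapsto x+y$; since $m$ and $n$ are finite, $m\times n$ is a finite complex Borel measure on $\R^2$ and $m\ast n$ is a well-defined finite complex Borel measure on $\R$. Thus, for every Borel set $E\subseteq\R$,
\begin{equation*}
(m\ast n)(E)=\int_{\R}\int_{\R}\mathbbm{1}_E(x+y)\,f(x)\,dx\,dn(y).
\end{equation*}

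The quickest route is then the following. If $\lambda(E)=0$ (Lebesgue measure), then $\lambda(E-y)=0$ for every $y\in\R$ by translation invariance of $\lambda$; hence the inner integral equals $m(E-y)=\int_{E-y}f\,d\lambda=0$ for every $y$, and integrating in $dn(y)$ gives $(m\ast n)(E)=0$. By the Radon--Nikodym theorem this is precisely the absolute continuity of $m\ast n$. If one also wants the density explicitly, one exchanges the order of integration and performs the substitution $z=x+y$ in the inner integral, obtaining
\begin{equation*}
(m\ast n)(E)=\int_E\Big(\int_{\R}f(z-y)\,dn(y)\Big)dz,
\end{equation*}
so that $m\ast n$ has density $g(z):=\int_{\R}f(z-y)\,dn(y)\in L^1(\R)$, with $\|g\|_{L^1}\le\|f\|_{L^1}\,|n|(\R)$.

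The only point that deserves some care is the legitimacy of Fubini's theorem for complex (signed) measures used above. This is routine: the integrand $(x,y)\mapsto\mathbbm{1}_E(x+y)\,f(x)$ is Borel measurable and integrable with respect to $|f|\,d\lambda\times|n|$, since $|n|$ is a finite positive measure and $f\in L^1$; equivalently, one may decompose $m$ and $n$ into their (four) nonnegative Jordan components and apply the classical Tonelli theorem to each product. No genuine obstacle is expected; the content of the proposition lies entirely in the translation invariance of Lebesgue measure.
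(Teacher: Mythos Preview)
Your proof is correct and essentially matches the paper's: both write $dm=f\,dx$, apply Fubini together with the substitution $z=x+y$, and obtain the explicit $L^1$ density $z\mapsto\int_{\R} f(z-y)\,dn(y)$ with the same bound $\|f\|_{L^1}\,|n|(\R)$. Your preliminary null-set shortcut via translation invariance of Lebesgue measure is a small streamlining the paper omits, but the explicit-density computation you also give is exactly the paper's argument (with slightly more care about justifying Fubini).
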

	 
\begin{proof}
Let $f \in \mathcal{L}^1(d (m n)),$ with $m n$ the product measure.
One has
\begin{align*}
	 \int_{\R} f d(m\ast n) &= \int_{\R^2} f(x+y)\ d m(x)\ d n(y)\\
	 &= \int_{\R} \int_{\R} f(z)\ d m(z-y)\ d n(y)\\
	 &= \int_{\R} \int_{\R} f(z)\ g(z-y)\ d z\ d n(y)\\
	 &= \int_{\R} f(z) \left(\int_{\R} g(z-y) d n(y)\right)\ d z,
\end{align*}
where ${\cal G}(z) = \int_{\R} g(z-y) d n(y) \in \mathcal{L}^1(dz),$ since
\begin{align*}
	\int_{\R} \left|\int_{\R} g(z-y) d n(y)\right|\ d z & \leq \int_{\R} \left(\int_{\R} |g(z-y)| \ d |n|(y)\right) \ d z\\
	&= \int_{\R} \left(\int_{\R} |g(z-y)| \ d z\right)\ d |n|(y)\\
	&=\int_{\R} \left(\int_{\R} |g(z)| \ d z\right)\ d |n|(y)\\
	&=\left(\int_{\R} |g(z)| \ d z\right) |n(\R)| < +\infty.
\end{align*}
\end{proof}

\begin{proof}[Proof of Theorem \ref{intro main result}]
	By density argument, it is sufficient to prove absolute continuity of the real Borel measure
	$\langle P (\ \cdot\ ) \psi, \psi \rangle_{\mathcal{H}}$
	for the set of all finite linear combinations of pure tensors,
	$$
	\psi = \sum_{j=1}^{N} \alpha_j \ \chi_j \otimes \phi_j,\quad N>0,\ \alpha \in \C,\ \chi_j,\ \phi_j \in \ldue(\Z).
	$$
	We apply Theorem \ref{teo mvp tensore} to get
	\begin{align*}
		\langle P (\ \cdot\ ) \psi, \psi \rangle_{\ldue(\Z^2)}
		&=\sum_{j, k=1}^{N} \overline{\alpha}_j \alpha_k \langle P (\ \cdot\ ) \chi_j \otimes \phi_j, \chi_k \otimes \phi_k \rangle_{\ldue(\Z^2)}\\
		&=\sum_{j, k=1}^{N} \overline{\alpha}_j \alpha_k 
		\langle P_1(\ \cdot\ ) \chi_j, \chi_k \rangle_{\ldue(\Z)} \ast \langle P_2(\ \cdot\ ) \chi_j, \chi_k \rangle_{\ldue(\Z)},
	\end{align*}
	where $P_1$ is the projection valued measure associated to $A_1$ and $P_2$ is associated to $-\Delta_m.$\\ 
	Thus we deal with a linear combination of the complex measures $\langle P_1(\ \cdot\ ) \chi_j, \chi_k \rangle_{\ldue(\Z)} \ast \langle P_2(\ \cdot\ ) \chi_j, \chi_k \rangle_{\ldue(\Z)},$ whose absolute continuity follows from Proposition \ref{ass cont conv}, and from the absolute continuity of the measures $\langle P_2(\ \cdot\ ) \chi_j, \chi_k \rangle_{\ldue(\Z)}$.
\end{proof}

\section{Dispersive Estimates}
In the following, if $p \in [1,\ \infty]$ we will denote with
$\ell^p_n$ (respectively, $\ell^p_m$) the $\ell^p$ space of a complex
valued sequence with respect to its integer index $n$ (respectively,
its $\ell^p$ space with respect to its integer index
$m$). Furthermore, $\ell^{p}_{n,m}$ will denote the norm of complex
sequence with respect to both the indexes $n,\ m.$

\begin{proof}[Proof of Theorem \ref{intro dispersive}]
Denote, by abuse of notation, $A_1= A_1\otimes \id$ and
$-\Delta_m=\id\otimes (-\Delta_m)$, then such operators strongly
commute and therefore one has
	$$
	e^{i (-\Delta+ V)t}= e^{i (-\Delta_m+ A_1)t} = e^{i A_1 t} e^{-\Delta_m t} \quad \forall \ t \in \R.
	$$
	The dispersive estimate for $-\Delta+V$ then follows from the analogous dispersive estimate for the one-dimensional discrete laplacian $-\Delta_m$ (see \cite{ste_kev}): let $\phi \in \ell^1(\Z);$ then $	\exists\ C>0$ such that
	$$
  \|e^{-i \Delta_m t}\phi\|_{\ell^{\infty}(\Z)} \leq C \langle t \rangle ^{-\frac{1}{3}} \|\phi\|_{\ell^{1}(\Z)}.
	$$
	Indeed, $\forall t \in \R$, if $\psi_0 \in \ell^1_{n,m}$ one has
	\begin{align*}
	\| e^{i (-\Delta+V)t} \psi_0 \|_{\ell^{\infty}_{n,m} }
	&=\|e^{i A_1t}\left( e^{-i \Delta_m t} \psi_0 \right) \|_{\ell^{\infty}_{n,m} }\\
	&\leq \norm{\norm{e^{i A_1t}\left( e^{-i \Delta_m t} \psi_0 \right)}_{\ldue_n}}_{\ell^{\infty}_m}\\
	&=\norm{\norm{ e^{-i \Delta_m t} \psi_0}_{\ldue_n}}_{\ell^{\infty}_m}\\
	&\leq \norm{ \norm{ e^{-i \Delta_m t} \psi_0 }_{\ell^{\infty}_m}}_{\ldue_n}\\
	& \leq C \langle t \rangle ^{-\frac{1}{3}} \norm{\norm{\psi_0}_{\ell^{1}_m}}_{\ldue_n}\\
	& \leq  C \langle t \rangle ^{-\frac{1}{3}} \norm{\psi_0}_{\ell^{1}_{n,m}}.
	\end{align*}
\end{proof}

\end{document}